\title{Maximum weighted independent sets with a budget}
\author[1]{Tushar Kalra}
\author[1]{Rogers Mathew}
\author[1]{Sudebkumar Prasant Pal}
\author[1]{Vijay Pandey}
 \affil[1]{
 	Department of Computer Science and Engineering, \authorcr 
 	Indian Institute of Technology \authorcr
 	Kharagpur 721302, West Bengal, India. \authorcr
 	\texttt{\{tushar11nitjkalra, rogersmathew, sudebkumar, vijayiitkgp13\}@gmail.com}
 }
\theoremstyle{definition}
\newtheorem{definition}{Definition}
\theoremstyle{plain}
\newtheorem{theorem}{Theorem}
\newtheorem{lemma}[theorem]{Lemma}
\newtheorem{construction}[theorem]{Construction}
\newtheorem{corollary}[theorem]{Corollary}
\newtheorem{observation}[theorem]{Observation}
\newtheorem{example}[theorem]{Example}
\theoremstyle{remark}
\newtheoremstyle{plainitshape}
  {}
  {}
  {\itshape}
  {}
  {\itshape}
  {.}
  {0.5em}
  {}
\theoremstyle{plainitshape}
\newtheoremstyle{cases}
  {}
  {}
  {}
  {}
  {}
  {\newline}
  {0.5em}
  {{\itshape \thmname{#1}} \thmnumber{#2} ({\itshape\thmnote{#3}}).\medskip}
\theoremstyle{cases}
\newtheorem{case}{Case}
\newtheoremstyle{constructions}
  {}
  {}
  {}
  {}
  {}
  {}
  {0.5em}
  {{\itshape \thmname{#1}} \thmnumber{#2}\medskip}
\theoremstyle{constructions}
\algnewcommand\algorithmicinput{\textbf{INPUT:}}
\algnewcommand\INPUT{\item[\algorithmicinput]}
\algnewcommand\algorithmicoutput{\textbf{OUTPUT:}}
\algnewcommand\OUTPUT{\item[\algorithmicoutput]}
\begin{document}
\date{}
\maketitle
\begin{abstract}
Given a graph $G$, a non-negative integer $k$, and a weight function that maps each vertex in $G$ to a positive real number, the \emph{Maximum Weighted Budgeted Independent Set (MWBIS) problem} is about finding a maximum weighted independent set in $G$ of cardinality at most $k$. A special case of MWBIS, when the weight assigned to each vertex is equal to its degree in $G$, is called the \emph{Maximum Independent Vertex Coverage (MIVC)} problem. In other words, the MIVC problem is about finding an independent set of cardinality at most $k$ with maximum coverage. 

Since it is a generalization of the well-known Maximum Weighted Independent Set (MWIS) problem, MWBIS too does not have any constant factor polynomial time approximation algorithm assuming $P \neq NP$. In this paper, we study MWBIS in the context of bipartite graphs. We show that, unlike MWIS, the MIVC (and thereby the MWBIS) problem in bipartite graphs is NP-hard. 
Then, we show that the MWBIS problem admits a $\frac{1}{2}$-factor approximation algorithm in the class of bipartite graphs, which matches the integrality gap of a natural LP relaxation. 
\\
\noindent\textbf{Keywords: }Independent set, partial vertex cover, coverage, approximation algorithm, NP-hard, inapproximability.  
 
\end{abstract}
\section{Introduction}
\subsection{Problem definition}
Let $G$ be a graph and let $w:V(G) \rightarrow \mathbb{R}^+$ be a function that assigns positive real numbers as weights to the vertices of $G$. Under this assignment of weights, for any set $S \subseteq V(G)$, we define the weight of $S$, denoted by $w(S)$, as the sum of the weights of the vertices in $S$. The famous \emph{Maximum Weighted Independent Set (MWIS) problem} is about finding an independent set of vertices in $G$ that has the highest weight amongst all the independent sets in $G$. In this paper, we  study a budgeted version of the well-studied MWIS problem, namely \emph{Maximum Weighted Budgeted Independent Set} (MWBIS) problem. 
\begin{definition}
\label{def:MWBIS}
Given a graph $G$, a weight function $w: V(G) \rightarrow \mathbb{R}^+$, and a positive integer $k$, the \emph{MWBIS problem} is about finding an independent set of size at most $k$ in $G$ that has the highest weight amongst all independent sets of size at most $k$ in $G$.
\end{definition}
\subsection{Related work}
A more general problem (also known by the same name, MWBIS,) was introduced and studied in the context of special graphs like trees, forests, cycle graphs, interval graphs, and planar graphs in \cite{Bandyapadhyay14} where each vertex in the given graph $G$ has a cost associated with it and the problem is about finding an independent set of total cost at most $C$ (, where $C$ is a part of the input,) in $G$ that has the highest weight amongst all such independent sets. Apart from this work, to the best of our knowledge, not much is known about MWBIS.

Given a graph $G$, we know that the \emph{Vertex Cover (VC) problem} is about finding the minimum number of vertices that cover all the edges of $G$. Several variants of the VC problem has been studied in the literature. We discuss about a couple of them here. For a positive integer $t$, the \emph{Partial Vertex Cover (PVC) problem} is about finding the minimum number of vertices that cover at least $t$ distinct edges of $G$. In the year $1998$, Burroughs and Bshouty introduced and studied the problem of partial vertex cover \cite{bshouty1998massaging}. In this paper, the authors gave a $2$-factor approximation algorithm by rounding  fractional optimal solutions given by an LP relaxation of the problem. Bar-Yehuda in \cite{bar2001using} came up with another $2$-approximation algorithm that relied on the beautiful `local ratio' method. A primal-dual algorithm achieving the same approximation factor was given in \cite{gandhi2004approximation}. In \cite{caskurlu2014partial}, it was shown that the PVC problem on bipartite graphs is NP-hard. 

Another popular variant of the VC problem is the \emph{Maximum Vertex Coverage} (MVC) problem. Given a graph $G$ and a positive integer $k$, the \emph{MVC problem} is about finding $k$ vertices that maximize the number of distinct edges covered by them in $G$. Ageev and Sviridenko in \cite{ageev1999approximation} gave a $3/4$-approximation algorithm for the MVC problem. An approximation algorithm, that uses a semidefinite programming technique, based on a parameter whose factor of approximation is better than $3/4$ when the parameter is sufficiently large was shown in \cite{han2002approximation}. Apollonio and Simeone in \cite{apollonio2014maximum} proved that the MVC problem on bipartite graphs is NP-hard. The same authors in \cite{apollonio2014improved} gave a $4/5$ factor approximation algorithm for MVC on bipartite graphs that exploited the structure of the fractional optimal solutions of a linear programming formulation for the problem. The authors of \cite{caskurlu2014partial} improved this result to obtain an $8/9$ factor approximation algorithm for MVC on bipartite graphs. 

\subsection{MIVC problem - a special case of MWBIS}
Let us come back to our problem - the MWBIS problem. In this problem, what happens if the weight function given maps each vertex to its degree in $G$? That is, let $w(v) = deg(v),~\forall v \in V(G)$. 
We call this the \emph{Maximum Independent Vertex Coverage (MIVC) problem}. 
\begin{definition}
\label{def:MIVC1}
Given a graph $G$, a weight function $w: V(G) \rightarrow \mathbb{R}^+$ defined as, for each vertex $v \in V(G)$, $w(v) = deg(v)$, and a positive integer $k$, the MIVC problem is about finding an independent set of size $k$ in $G$ that has the highest weight amongst all independent sets of size at most $k$.
\end{definition}  
Observe that the MIVC problem, as its name suggests, can also be seen as a variant of the MVC problem where the $k$ vertices that we choose need to be an independent set. This observation gives us the following alternate definition. 
\begin{definition}
\label{def:MIVC2}
Given a graph $G$ and a positive integer $k$, the \emph{MIVC problem} is about finding at most $k$ independent vertices that maximize the number of edges covered by them in $G$. 
\end{definition}
\subsection{IP formulation of MWBIS}
\label{subsec:IPforMWBIS}
Let $G$ be a graph on $n$ vertices, where $V(G) = \{v_1, \ldots , v_n\}$, and let $w:V(G) \rightarrow \mathbb{R}^+$ be a weight function given. Let $k$ be a positive integer given. Let $\mathcal{C}$ denote the set of all maximal cliques in $G$. In order to formulate MWBIS on $G$ as an integer program, let us assign a variable $x_i$ for each $v_i \in V(G)$, which is allowed $0/1$ values. The variable $x_i$ will be set to $1$ if and only if the vertex $v_i$ is picked in the independent set.   
\begin{align}
Maximize &\sum_{i \in [n]} w(v_i)\cdot x_{i} \label{IP:MIVC} \\
&s.t. \sum_{i=1}^{n} x_{i} \leq k  \nonumber \\
&\sum_{i:v_i \in C}x_i \leq 1, \quad \forall C \in \mathcal{C} \nonumber \\
&x_{i} \in \lbrace 0,1 \rbrace , \quad \forall i \in [n]. \nonumber 
\end{align}
The constraint $\sum_{i=1}^{n} x_{i} \leq k$ ensures that not more than $k$ vertices are picked. The constraint $\sum_{i:v_i \in C}x_i \leq 1$, for each maximal clique $C \in \mathcal{C}$, ensures that the set of selected vertices is an independent set. Replacing $w(v_i)$ with $deg(v_i)$ in the objective function of (\ref{IP:MIVC}) yields the IP formulation of MIVC for $G$. 

Later in Section \ref{subsec:bipartiteIntGap}, we consider the LP-relaxation of this integer program. We show an instance of the MIVC problem (a bipartite graph) to illustrate the fact that the integrality gap of this LP relaxation is not `good'. This rules out the possibilities of using this relaxation to obtain good  LP-based approximation algorithms for the MIVC problem (and thereby for the MWBIS problem) in bipartite graphs.

\subsection{Hardness results}
It is well-known that the MWIS problem cannot be approximated to a constant factor in polynomial time, unless P=NP. For every sufficiently large $\Delta$, there is no $\Omega(\frac{\log^2\Delta}{\Delta})$-factor polynomial time approximation algorithm for MWIS in a degree-$\Delta$ bounded graph, assuming the unique games conjecture and $P \neq NP$ \cite{austrin2009inapproximability}. As MWBIS is a generalization of MWIS, these results hold   true for the MWBIS problem too. But, what about the special case of MWBIS - the MIVC problem?   
Below, we show that it is hard to approximate MIVC within a certain factor. We prove this by giving an approximation factor preserving reduction from another problem, namely $3$-Maximum Independent Set ($3$-MIS) problem. The \emph{$3$-MIS problem} is about finding an independent set of maximum cardinality in a given $3$-regular graph. 
\begin{theorem}
\label{thm:approxHardness}
There is no  $\left(\frac{139}{140} + \epsilon \right)$-factor approximation algorithm for the MIVC problem, assuming $P \neq NP$, where $\epsilon > 0$. 
\end{theorem}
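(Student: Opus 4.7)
The plan is to exhibit an approximation-factor preserving reduction from the $3$-MIS problem to MIVC, exploiting the fact that the MIVC weight function is constant on any regular graph. The only external ingredient I would rely on is the known inapproximability of $3$-MIS due to Chleb\'{i}k and Chleb\'{i}kov\'{a}: assuming $P\neq NP$, there is no polynomial time $\left(\frac{139}{140}+\epsilon\right)$-factor approximation algorithm for the maximum independent set problem on cubic graphs. This would be quoted directly as the source of the hardness threshold.

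The reduction itself is essentially the identity map. Given an instance $G$ of $3$-MIS on $n$ vertices, I would take the MIVC instance to be the same graph $G$ with budget $k=n$. Since $G$ is $3$-regular, the MIVC weight function satisfies $w(v)=\deg(v)=3$ for every $v\in V(G)$; and since the budget $k=n$ is vacuous, every independent set of $G$ is feasible. Thus the MIVC objective value on the instance $(G,n)$ equals $3|S|$ for any independent set $S$ of $G$, and in particular the MIVC optimum is $3\,\alpha(G)$, where $\alpha(G)$ denotes the size of a maximum independent set of $G$.

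To conclude, suppose for contradiction that MIVC admits a polynomial time $\left(\frac{139}{140}+\epsilon\right)$-factor approximation algorithm. Running it on $(G,n)$ returns an independent set $S$ of $G$ with $3|S|\geq \left(\frac{139}{140}+\epsilon\right)\cdot 3\alpha(G)$, so $|S|\geq \left(\frac{139}{140}+\epsilon\right)\alpha(G)$. This is a $\left(\frac{139}{140}+\epsilon\right)$-approximation for $3$-MIS, contradicting the cited hardness result.

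The reduction does essentially no work beyond observing that on a $3$-regular graph MIVC reduces (up to the uniform factor $3$ in the objective) to ordinary maximum independent set, and hence any approximation ratio transfers one-to-one. The only point that needs care is importing the inapproximability result for $3$-MIS in precisely the form $\frac{139}{140}+\epsilon$; I do not anticipate any other obstacle. If desired, one could instead take $k$ to be any value that does not constrain a maximum independent set of $G$ (for instance $k=\lceil n/2\rceil$, which suffices since $\alpha(G)\leq n/2$ for any triangle-free $3$-regular graph used in the Chleb\'{i}k--Chleb\'{i}kov\'{a} construction), but this refinement is cosmetic.
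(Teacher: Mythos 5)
Your proposal is correct and is essentially identical to the paper's proof: both reduce from $3$-MIS via the identity map with the vacuous budget $k=n$, using the fact that on a $3$-regular graph the MIVC objective is $3|S|$ so the approximation ratio transfers exactly. The only difference is bibliographic: the paper imports the $\left(\frac{139}{140}+\epsilon\right)$ hardness of $3$-MIS from Berman and Karpinski (statement $(v)$ of Theorem $1$ in \cite{karpinski1998some}) rather than from Chleb\'{i}k and Chleb\'{i}kov\'{a}.
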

\begin{proof}
It was shown in (Statement $(v)$ of Theorem $1$ in) \cite{karpinski1998some} that there is no  $\left(\frac{139}{140} + \epsilon \right)$-factor approximation algorithm for the $3$-MIS problem, assuming $P \neq NP$, where $\epsilon > 0$. If there existed a polynomial time $f$-factor approximation algorithm $A(G,k)$ for the MIVC problem, where $f > 139/140$, then we could use it as an $f$-factor approximation algorithm for the $3$-MIS problem as follows.  Given a positive integer $k$ and a $3$-regular graph $G$ on $n$ vertices as input, we run $A(G,n)$. This algorithm will return an independent set of cardinality at least $f$ times the size of a maximum independent set in $G$.  
\end{proof}
We define decision versions of the MWBIS and the MIVC problems as:-
\begin{eqnarray*}
\mbox{D-MWBIS} & = & \{ <G,w,k,t>~|~G \mbox{ is an undirected graph, } w:V(G) \rightarrow \mathbb{R}^+ \mbox{ is a } \\                & & \mbox{weight function and } G \mbox{ contains an independent } \mbox{set of cardinality } \\
& & \mbox{at most } k \mbox{ whose } \mbox{weight is at least } t \}. \\
\mbox{D-MIVC} & = & \{ <G,k,t>~|~G \mbox{ is an undirected graph and } G \mbox{ contains a set of } \\  
& & \mbox{at most } k \mbox{ independent vertices } \mbox{ that cover at least } t \mbox{ distinct } \\
& & \mbox{edges}\}.  
\end{eqnarray*}
Since the MIS problem and the VC problem are known to be NP-hard, it is not surprising to see that the D-MIVC (and thereby the D-MWBIS) problem is NP-hard for general graphs. The following is an easy corollary to Theorem \ref{thm:approxHardness}
\begin{corollary}
D-MIVC problem is NP-hard. 
\end{corollary}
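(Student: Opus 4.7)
The plan is to derive the NP-hardness of D-MIVC by a minor rephrasing of the reduction already present in the proof of Theorem~\ref{thm:approxHardness}. That proof implicitly reduces the $3$-MIS problem to MIVC by running an algorithm $A(G,n)$ on a $3$-regular graph; I would simply cast this as a polynomial-time many-one reduction between the corresponding decision problems.

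First, I would record the key structural fact: if $G$ is a $3$-regular graph and $I \subseteq V(G)$ is an independent set, then every edge incident to $I$ has exactly one endpoint in $I$, so the number of distinct edges covered by $I$ equals $\sum_{v \in I} \deg(v) = 3|I|$. In particular, the MIVC value on input $(G,n)$ coincides with $3$ times the independence number of $G$, since the choice $k = n$ renders the cardinality constraint vacuous.

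Next, I would define the reduction. Given a decision instance $\langle G, s \rangle$ of $3$-MIS, where $G$ is $3$-regular on $n$ vertices, I output the instance $\langle G, n, 3s \rangle$ of D-MIVC. By the observation above, $G$ contains an independent set of size at least $s$ if and only if $G$ admits an independent set of size at most $n$ covering at least $3s$ distinct edges, i.e.\ if and only if $\langle G, n, 3s \rangle$ is a YES-instance of D-MIVC. The reduction is clearly computable in polynomial time.

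Since the decision version of $3$-MIS is NP-hard (it is a special case of the hardness result from \cite{karpinski1998some} invoked in Theorem~\ref{thm:approxHardness}), the reduction transfers NP-hardness to D-MIVC. I do not foresee any real obstacle here: the only subtle point is recognising that on $3$-regular graphs the weight $w(v) = \deg(v) = 3$ makes the coverage of an independent set exactly proportional to its cardinality, so ``covering many edges'' and ``being large'' become equivalent, and the budget constraint can be made non-binding by choosing $k = n$.
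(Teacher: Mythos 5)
Your proposal is correct and is essentially the paper's own argument: the corollary is stated as an immediate consequence of Theorem~\ref{thm:approxHardness}, whose reduction sets $k=n$ on a $3$-regular graph exactly as you do, implicitly using that an independent set $I$ there covers $3|I|$ edges. You merely make this explicit as a Karp reduction from decision $3$-MIS (whose NP-hardness the paper also cites, via \cite{garey1976some}), which is a sound and slightly more self-contained phrasing of the same idea.
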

\subsection{Motivation}
The main motivation behind studying the MWBIS problem is the fact that it is a natural generalization of a very well-studied problem in the context of graph algorithms and approximation algorithms, the MWIS problem. 
MWIS problem finds application in wireless networks, scheduling, molecular biology, pattern recognition, coding theory, etc. MWBIS finds application in most scenarios where MWIS is used. Refer \cite{Bandyapadhyay14} to know about applications of MWBIS in `job scheduling in a computer' and in `selecting non-interfering set of transmitters'.

\subsection{Our contribution}
We know that the MWIS problem on bipartite graphs is polynomial time solvable. Intuitively, the MWBIS problem, being a budgeted variant of the MWIS, is also expected to follow the same behaviour. But, contrary to our intuition, we show in Theorem \ref{thm:bipNPhard} that the D-MIVC (and thereby the D-MWBIS) problem is NP-hard for bipartite graphs. This motivated us into looking at approximation algorithms. In Section \ref{subsec:halfApprox}, we give an $O(nk)$ time, $\frac{1}{2}$-factor greedy approximation algorithm for the MWBIS problem on bipartite graphs. We give a tight example to the algorithm. In Section \ref{subsec:bipartiteIntGap}, we consider the LP relaxation of the integer program for MWBIS given in Section \ref{subsec:IPforMWBIS}. We show that the integrality gap for this LP is upper bounded by $\frac{1}{2} + \epsilon$ for bipartite graphs, where $\epsilon$ is any number greater than $0$. In other words, no LP-based technique, that uses this natural LP relaxation of MWBIS, is going to give us a better factor approximation algorithm for the MWBIS problem on bipartite graphs. 

\subsection{Notational note}
Throughout the paper, we consider only finite, undirected, and simple graphs. For a graphs $G$, we shall use $V(G)$ to denote its vertex set and $E(G)$ to denote its edge set. For any $S \subseteq V(G)$, we shall use $|S|$ to denote the cardinality of the set $S$. If a weight function $w:V(G) \rightarrow \mathbb{R^+}$ is given, then $||S||$ shall be used to denote the sum of the weights of the vertices in $S$. Otherwise, $||S||$ will denote the number of edges in $G$ having at least one endpoint in $S$. For any vertex $v$ in the graph under consideration, we shall use $deg(v)$ to denote its degree in the graph. For any positive integer $n$, we shall use $[n]$ to denote the set $\{1, \ldots , n\}$.  
\section{MWBIS in bipartite graphs}
\label{sec:bip}
We begin by first showing that D-MIVC 
on bipartite graphs is NP-hard in Section \ref{subsec:NPhard}. We give a $1/2$ factor approximation algorithm for MWBIS in Section \ref{subsec:halfApprox} and show in Section \ref{subsec:bipartiteIntGap} that LP-based methods are unlikely to yield better approximation algorithms. 
\subsection{NP-hardness of D-MIVC in bipartite graphs}
\label{subsec:NPhard}
We show that D-MIVC problem on bipartite graphs is NP-hard by reducing from $(n-4,k)$-CLIQUE problem. The $(n-4,k)$-CLIQUE problem is the famous clique problem on a graph, where the graph under consideration is an $(n-4)$-regular graph on $n$ vertices. \\
$(n-4,k)$-CLIQUE = $\{ <G,k>~|~G$ is an $(n-4)$-regular graph on  $n$  vertices that contains a clique of size at least  $k$,  where  $k < \frac{n}{2}\}$. \\
The $(3,k)$-Independent Set ($(3,k)$-IS) problem is about deciding whether there exists an independent set of size at least $k$ or not in a given $3$-regular graph. It was shown in \cite{garey1976some} that the $(3,k)$-IS problem is NP-hard. This means that the $(n-4,k)$-CLIQUE problem is also NP-hard. Note that when $k \geq \frac{n}{2}$, the $(3,k)$-IS problem is about deciding whether the given graph is bipartite or not and is therefore polynomial-time solvable. Below we outline our reduction from the $(n-4,k)$-CLIQUE problem to D-MIVC.

\begin{construction} 
\label{construction:bipartite}
Let $n$ be a positive integer greater than $11$ and let $r=n-4$. Given an $r$-regular graph $G$ with $V(G) = \{v_1, \ldots , v_n\}$ and $E(G) = \{e_1, \ldots , e_m\}$, we construct a bipartite graph $H$ with bipartition $\{A,B\}$, where $A = \{a_1, \ldots , a_m\}$, $B = \beta \cup \Pi$ with $\beta = \{b_1, \ldots , b_n\}$ and $\Pi = \{p_{i,j,}~|~i\in [m], j \in [r-3]\}$. The edge set  $E(H) = \{ a_ip_{i,j}~|~i \in [m], j \in [r-3]\} \cup \{a_ib_j~|~\mbox{edge } e_i \mbox{ is incident on vertex } v_j \mbox{ in } G\}$. Note that, $deg(a_i) = 2 + (r-3) = r-1$, $deg(b_i) = r$, and every $p_{i,j}$ is a pendant vertex.  
\end{construction}
\begin{observation}
\label{obv:incidence}
In Construction \ref{construction:bipartite}, the subgraph of $H$ induced on the vertex set $A \cup \beta$ is isomorphic to the incidence graph of $G$. Thus, $H$ is isomorphic to the incidence graph of $G$ with $r-3$ pendant vertices (, namely the $p_{i,j}$ vertices) hanging down from each $a_i$. 
\end{observation}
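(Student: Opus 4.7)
The plan is to unpack the definitions; this observation is essentially bookkeeping on Construction~\ref{construction:bipartite}. First, I recall that the incidence graph of $G$ is the bipartite graph with one side being $V(G) = \{v_1,\ldots,v_n\}$, the other side being $E(G) = \{e_1,\ldots,e_m\}$, and an edge joining $e_i$ to $v_j$ precisely when $v_j$ is an endpoint of $e_i$ in $G$.

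To establish the first claim, I would define a map $\phi$ on $V(H[A\cup\beta]) = A \cup \beta$ by sending $a_i \mapsto e_i$ and $b_j \mapsto v_j$. This is clearly a bijection onto the vertex set of the incidence graph of $G$. The edges of $H[A\cup\beta]$ are, by the definition of $E(H)$, exactly the pairs $a_i b_j$ such that $e_i$ is incident on $v_j$ in $G$, and under $\phi$ these are precisely the edges of the incidence graph of $G$. Hence $\phi$ is a graph isomorphism, proving the first part.

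For the second part, I would check that each $p_{i,j} \in \Pi$ is indeed a pendant vertex in $H$ attached to $a_i$. The only edges of $H$ incident on vertices of $B = \beta \cup \Pi$ run into $A$, since $\{A,B\}$ is a bipartition. Among those, the edges touching $\Pi$ are explicitly $\{a_i p_{i,j} : i \in [m],\, j \in [r-3]\}$, so each $p_{i,j}$ has degree exactly one and its unique neighbour is $a_i$. For fixed $i$, the $r-3$ vertices $p_{i,1},\ldots,p_{i,r-3}$ are thus all pendants hanging from $a_i$. Combined with the first part, this exhibits $H$ as the incidence graph of $G$ together with $r-3$ pendants appended to each edge-vertex $a_i$, as required.

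The main obstacle here is essentially nonexistent, as the observation is a direct translation of the construction; the only sanity check worth performing is noting that $H$ really is bipartite with the claimed bipartition (no edge of $E(H)$ lies within $A$ or within $B$), which is immediate from the explicit listing of $E(H)$.
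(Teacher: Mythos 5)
Your proposal is correct and matches the paper's treatment: the paper states this as an observation without proof, precisely because it is the direct definitional verification you carry out (the map $a_i \mapsto e_i$, $b_j \mapsto v_j$ is an isomorphism onto the incidence graph since the $a_ib_j$ edges of $H$ encode exactly the incidences of $G$, and each $p_{i,j}$ has $a_i$ as its unique neighbour). Nothing is missing and nothing differs in substance from what the paper leaves implicit.
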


In Lemma \ref{lem:NP-hard-Bip-1}, we prove that if  the graph $G$ contains a clique of size $k$, then there exist some $k+x$ independent vertices in $H$ that cover at least $kr + x(r-1)$ edges, where $x=m - (kr - {k \choose 2})$. In Lemma \ref{lem:NP-hard-Bip-2}, we prove the reverse implication of this statement. 
\begin{lemma}
\label{lem:NP-hard-Bip-1}
If there exists a clique of size $k$ in the $r$-regular graph $G$ given in Construction \ref{construction:bipartite}, then there exists $k+x$ independent vertices in the bipartite graph $H$ constructed that cover at least $kr + x(r-1)$ edges, where $x=m - (kr - {k \choose 2})$. 
\end{lemma}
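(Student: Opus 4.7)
The plan is to exhibit an explicit $(k+x)$-vertex independent set $S$ in $H$ and count its edge coverage directly. Given a clique $K = \{v_{i_1}, \ldots, v_{i_k}\}$ in $G$, I take the corresponding vertices on the $B$-side, namely $S_b := \{b_{i_1}, \ldots, b_{i_k}\}$, and combine them with the subset $S_a \subseteq A$ consisting of precisely those $a$-vertices that are \emph{not} adjacent to any vertex of $S_b$. This choice automatically guarantees independence of $S := S_a \cup S_b$ in $H$, since $A$ and $B$ are each independent in the bipartition and all crossing edges between $S_a$ and $S_b$ have been excluded by construction.

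Next I will verify that $|S_a| = x$. By Observation~\ref{obv:incidence}, the neighbors of $b_{i_\ell}$ inside $A$ are exactly the $a$-vertices corresponding to edges of $G$ incident on $v_{i_\ell}$. Hence the number of $a$-vertices adjacent to some vertex of $S_b$ equals the number of distinct edges of $G$ incident on the clique $K$, which a routine double-count ($\sum_\ell \deg_G(v_{i_\ell}) = kr$, with each intra-clique edge counted twice) shows to be $kr - \binom{k}{2}$. Subtracting from $m$ gives $|S_a| = m - (kr - \binom{k}{2}) = x$. One should briefly note that $x \geq 0$ follows from $m = nr/2$ together with the hypothesis $k < n/2$ baked into the $(n-4,k)$-CLIQUE formulation.

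For the coverage count, I will use the fact that an independent set $T$ in any graph covers exactly $\sum_{v \in T} \deg(v)$ edges. Thus $S_b$, being independent with every $b_{i_\ell}$ of $\deg_H = r$, covers exactly $kr$ distinct edges, and similarly $S_a$ covers exactly $x(r-1)$ edges since each $a_i$ has $\deg_H = r-1$. These two edge-sets are disjoint: any common edge would necessarily be of the form $a_i b_{i_\ell}$ with $a_i \in S_a$ adjacent to $b_{i_\ell} \in S_b$, directly contradicting the defining property of $S_a$. Summing the two contributions yields exactly $kr + x(r-1)$ edges covered by $S$, as required. I do not anticipate a serious obstacle---this direction of the reduction is essentially clean bookkeeping, and the only mild subtlety is keeping the coverage of $S_a$ and $S_b$ separated to avoid double-counting.
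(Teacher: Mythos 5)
Your proof is correct and takes essentially the same approach as the paper's: both select the $b$-vertices corresponding to the clique together with the $a$-vertices corresponding to the $x = m - (kr - \binom{k}{2})$ edges of $G$ not incident on the clique, and count the coverage as $kr + x(r-1)$. Your write-up merely makes explicit the double-count giving $||S|| = kr - \binom{k}{2}$ and the disjointness of the two covered edge sets, which the paper compresses into ``it is easy to verify'' and ``clearly.''
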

\begin{proof}
Without loss of generality, let us assume that the set of vertices $S= \{v_1, \ldots , \\ v_k\}$ form a $k$-clique in $G$. It is easy to verify that $||S|| = kr - {k \choose 2}$ and therefore the number of edges not incident on the vertices of $S$ in $G$ is $x = m - ||S|| = m - (kr - {k \choose 2})$. Without loss of generality, let these $x$ edges be $e_1, \ldots , e_x$. Let $T = \{a_1, \ldots , a_x, b_1, \ldots , b_k\}$. Clearly, $T$ is a set of independent vertices in $H$ and $||T|| = kr + x(r-1)$.  
\end{proof}
Before we prove Lemma \ref{lem:NP-hard-Bip-2}, we prove a few supporting lemmas, namely Lemmas \ref{lemma:support1}
to \ref{lemma:support3}, that give us some insight into the structure of $G$ and $H$.   
\begin{lemma}
\label{lemma:support1}
Suppose there exists no clique of size $k$ in the $r$-regular graph $G$ given in Construction \ref{construction:bipartite}. Let  $i$ be an integer such that $0 \leq i \leq r-k$. Let $p_{k+i}(G) = \max\{m-||S||~:~S \subseteq V(G), |S| = k+i\}$. Then, \\
(i) $p_k(G) \leq x-1$, \\
(ii) for every $i \in [r-k]$, $p_{k+i}(G) \leq p_{k+i-1}(G) - (r - (k+i-2))$. 
\end{lemma}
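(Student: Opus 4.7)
The plan is to reduce both parts to a single double-counting identity that exploits the $r$-regularity of $G$. For any $S \subseteq V(G)$ with $|S| = j$, since $\sum_{v \in S} deg(v) = jr$ and each edge inside $G[S]$ is double-counted, we have
\[
||S|| = jr - |E(G[S])|.
\]
Consequently $m - ||S||$ is controlled purely by how many edges $G[S]$ contains, and the no-$k$-clique hypothesis is precisely what bounds $|E(G[S])|$ away from its trivial maximum.

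For part (i), I would take any $S$ of size $k$. Since $G$ has no $k$-clique, $G[S]$ is not a complete graph on $k$ vertices, so $|E(G[S])| \leq {k \choose 2} - 1$. Plugging into the identity gives $||S|| \geq kr - {k \choose 2} + 1$, and hence
\[
m - ||S|| \leq m - kr + {k \choose 2} - 1 = x - 1
\]
by the definition of $x$. Taking the maximum over such $S$ yields $p_k(G) \leq x-1$. For part (ii), let $S^*$ be a set of size $k+i$ achieving $p_{k+i}(G) = m - ||S^*||$. Since $k + i \geq k$, a $(k+i)$-clique in $G$ would contain a $k$-clique, so by hypothesis $S^*$ is not a clique in $G$, and therefore there is a non-edge $uv$ with $u, v \in S^*$. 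Setting $S' = S^* \setminus \{v\}$, the non-edge $uv$ forces $|N(v) \cap S'| \leq |S'| - 1 = k+i-2$, and $r$-regularity gives
\[
||S^*|| - ||S'|| = r - |N(v) \cap S'| \geq r - (k+i-2).
\]
Combining with $m - ||S'|| \leq p_{k+i-1}(G)$ and rearranging yields $p_{k+i}(G) \leq p_{k+i-1}(G) - (r-(k+i-2))$, as required.

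Neither part is technically hard; the main conceptual point is recognizing that the no-$k$-clique hypothesis propagates to all sizes at least $k$ and so forbids $S^*$ from being a clique on $k+i$ vertices, which is exactly what makes a non-adjacent pair inside $S^*$ available for the exchange argument in part (ii). Everything else is routine degree counting in the $r$-regular graph.
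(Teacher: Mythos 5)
Your proof is correct and follows essentially the same route as the paper's: part (i) is the same bound $||S|| \geq kr - \binom{k}{2} + 1$ (you just make the double-counting identity $||S|| = jr - |E(G[S])|$ explicit where the paper leaves it implicit), and part (ii) is the same exchange argument of deleting a vertex with at most $k+i-2$ neighbours inside the optimal $(k+i)$-set and charging the $r-(k+i-2)$ edges it sends outside. No gaps; nothing further needed.
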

\begin{proof}
For every $i$, let $S_{k+i}$ (, which is a subset of $V(G)$,) be a set such that $|S_{k+i}| = k+i$ and $m - ||S_{k+i}|| = p_{k+i}(G)$. \\
(i) Since $G$ does not contain any $k$-clique, $||S_k|| \geq kr - ({k \choose 2} - 1)$ and thus $p_k(G) = m - ||S_k|| \leq m - (kr - ({k \choose 2} - 1)) = x - 1$. \\
(ii) Let $i \in [r-k]$. Since $G$ does not contain any $k$-clique, $\exists v \in S_{k+i}$ such that $v$ has at most $k+i-2$  neighbours in $S_{k+i}$ and therefore at least $r- (k+i-2)$ neighbours outside $S_{k+i}$. Let $X_{k+i-1} = S_{k+i} \setminus \{v\}$. Then, $||S_{k+i}|| \geq ||X_{k+i-1}|| + r - (k+i-2) \geq ||S_{k+i-1}|| + r - (k+i-2)$ (from the definition of $S_{k+i-1}$). Therefore, $p_{k+i}(G) = m - ||S_{k+i}|| \leq m - (||S_{k+i-1}|| + r - (k+i-2)) \leq p_{k+i-1}(G) - (r - (k+i-2))$. 
\end{proof}	
Below, we  introduce a couple of new definitions. 
Consider the bipartite graph $H$ with bipartition $\{A,B\}$ (, where $B = \beta \uplus \Pi$,) constructed from $G$ in Construction \ref{construction:bipartite}. Let $i$ be an integer such that $0 \leq i \leq r-k$. 
\begin{definition}
\label{defn:Q}
For any set $S \subseteq \beta$, let $q(S) := \max \{|X|~:~X \subseteq A \mbox{ and no vertex in } X \\ \mbox{is adjacent with any vertex in } S\}$. Then, we define $q_{k+i}(H) := \max \{q(S)~:~S \mbox{ is a }\\ (k+i) \mbox{-sized subset of } \beta\}$. 
\end{definition}
\begin{definition}
\label{def:S}
Let $\mathcal{I}_{k+i} = \{I \subseteq V(H)~|~I \mbox{ is a } (k+x) \mbox{-sized independent set, and} |I \cap \beta| = k+i\}$, where $x=m - (kr - {k \choose 2})$. Then, we define $s_{k+i}(H) := \max \{||I||~:~I \in \mathcal{I}_{k+i}\}$. 
\end{definition}

Suppose the graph $G$ does not contain any $k$-clique. Then, from Observation \ref{obv:incidence}, Lemma \ref{lemma:support1}, and Definition \ref{defn:Q}, we have $p_{k+i}(G) = q_{k+i}(H)$. This gives us the following lemma. 
\begin{lemma}
\label{lemma:support2}
Suppose there exists no clique of size $k$ in the $r$-regular graph $G$ given in Construction \ref{construction:bipartite}. Consider the bipartite graph $H$ constructed from $G$. 
We have, 
\\
(i) $q_k(H) \leq x-1$, \\
(ii) for every $i \in [r-k]$, $q_{k+i}(H) \leq q_{k+i-1}(H) - (r-(k+i-2))$.  
\end{lemma}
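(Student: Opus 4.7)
The plan is to reduce Lemma \ref{lemma:support2} to Lemma \ref{lemma:support1} by establishing the identity $p_{k+i}(G) = q_{k+i}(H)$ for every $i$ with $0 \le i \le r-k$. Once this identity is in hand, statement (i) of the target lemma follows directly from statement (i) of Lemma \ref{lemma:support1}, and statement (ii) follows from statement (ii) of that lemma by substitution. So the whole content of the proof is the translation between the two quantities.

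First I would invoke Observation \ref{obv:incidence}, which says that the subgraph of $H$ induced on $A \cup \beta$ is isomorphic to the incidence graph of $G$. Under this isomorphism, a vertex $a_i \in A$ corresponds to the edge $e_i \in E(G)$ and a vertex $b_j \in \beta$ corresponds to the vertex $v_j \in V(G)$, with $a_i b_j \in E(H)$ precisely when $e_i$ is incident on $v_j$ in $G$. Consequently, for any subset $S \subseteq \beta$ with corresponding set $S' \subseteq V(G)$, an $a_i \in A$ has no neighbour in $S$ iff the edge $e_i$ has no endpoint in $S'$. Therefore $q(S) = m - \|S'\|$, and maximising over $(k+i)$-sized subsets on both sides gives $q_{k+i}(H) = p_{k+i}(G)$, as asserted.

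Next I would simply plug this identity into Lemma \ref{lemma:support1}. Part (i) of that lemma gives $p_k(G) \le x-1$, hence $q_k(H) \le x-1$, which is (i) of the target. Part (ii) gives $p_{k+i}(G) \le p_{k+i-1}(G) - (r-(k+i-2))$ for every $i \in [r-k]$, hence $q_{k+i}(H) \le q_{k+i-1}(H) - (r-(k+i-2))$, which is (ii). Note that the pendant vertices in $\Pi$ play no role here: they are irrelevant to $q(S)$ because they are not in $A$, and the definition of $q_{k+i}(H)$ only considers $(k+i)$-sized subsets of $\beta$, not of all of $B$.

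There is no real obstacle: the only thing to be slightly careful about is making the correspondence $S \leftrightarrow S'$ explicit and noting that, in $G$, ``edges not incident on any vertex of $S'$'' is exactly the set whose size equals $m - \|S'\|$ under the weight convention $\|\cdot\|$ used in the paper (counting edges with at least one endpoint in $S'$). Once this bookkeeping is stated cleanly, the lemma is immediate.
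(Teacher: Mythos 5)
Your proposal is correct and follows essentially the same route as the paper: the paper's (terse) justification of Lemma \ref{lemma:support2} is precisely the identity $p_{k+i}(G) = q_{k+i}(H)$, obtained from Observation \ref{obv:incidence} and Definition \ref{defn:Q}, followed by substitution into Lemma \ref{lemma:support1}. Your write-up merely makes explicit the bookkeeping (the correspondence $S \leftrightarrow S'$ and the fact that $q(S) = m - \Vert S' \Vert$) that the paper leaves implicit, which is a faithful and slightly more detailed rendering of the same argument.
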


\begin{lemma}
\label{lemma:support3}
Suppose there exists no clique of size $k$ in the $r$-regular graph $G$ given in Construction \ref{construction:bipartite}. Consider the bipartite graph $H$ constructed from $G$. Then, $s_r(H) < s_{r-1}(H) < \cdots <  s_k(H) < kr + x(r-1)$, where $x=m - (kr - {k \choose 2})$. 
\end{lemma}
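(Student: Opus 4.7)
The plan is to first express $||I||$ for any $I \in \mathcal{I}_{k+i}$ in a form that isolates the only free quantity, $|I \cap A|$. Since $I$ is independent, no edge of $H$ is counted twice, so $||I||$ equals the sum of $H$-degrees of its vertices; the vertices in $\beta$, $A$, $\Pi$ have respective degrees $r$, $r-1$, $1$. Writing $t := |I \cap A|$ and noting $|I \cap \Pi| = x-i-t$, this gives
\[ ||I|| = (k+i)r \,+\, (x-i) \,+\, (r-2)\,t. \]
Maximizing over $I \in \mathcal{I}_{k+i}$ therefore reduces to maximizing $t$ subject to $t \le \min\{q_{k+i}(H),\, x-i\}$ (the first bound is independence, the second the size budget); a routine check shows the pendant pool $|\Pi| = m(r-3)$ is always ample to fill the remaining $x-i-t$ slots, so both bounds are simultaneously attainable.

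The pivotal step is to prove that $q_{k+i}(H) \le x-i-1$ for every $i \in \{0,1,\ldots,r-k\}$, which forces the minimum above to always be $q_{k+i}(H)$. I would do this by induction on $i$: the base case $i=0$ is exactly Lemma \ref{lemma:support2}(i), and the inductive step combines the hypothesis $q_{k+i-1}(H) \le x-i$ with Lemma \ref{lemma:support2}(ii) and the inequality $r-(k+i-2) \ge 2$, valid for $i \le r-k$, to give $q_{k+i}(H) \le (x-i)-2 < x-i$. Together with the first step this yields the clean closed form
\[ s_{k+i}(H) \,=\, (k+i)r \,+\, (x-i) \,+\, (r-2)\,q_{k+i}(H). \]

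With this formula, the monotonicity and the final inequality both fall out by direct subtraction. Using Lemma \ref{lemma:support2}(ii) once more,
\[ s_{k+i-1}(H) - s_{k+i}(H) \,=\, (1-r) \,+\, (r-2)\bigl(q_{k+i-1}(H) - q_{k+i}(H)\bigr) \,\ge\, (1-r) + 2(r-2) \,=\, r-3, \]
and since $n > 11$ gives $r = n-4 \ge 8$ this gap is strictly positive, establishing $s_{k+i}(H) < s_{k+i-1}(H)$. For the final link of the chain, $s_k(H) \le kr + x + (r-2)(x-1) = kr + x(r-1) - (r-2) < kr + x(r-1)$.

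The main obstacle is the inductive bound $q_{k+i}(H) < x-i$: without it one lands in a ``capped'' regime where $t = x-i$ forces $s_{k+i}(H) = (k+i)r + (x-i)(r-1)$, a quantity which is actually \emph{increasing} in $i$, breaking the chain. So Lemma \ref{lemma:support2}, and through it the assumption that $G$ contains no $k$-clique, is doing the real combinatorial work: it ensures the independence constraint, not the budget, is what binds at every level.
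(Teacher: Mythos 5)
Your proposal is correct and takes essentially the same route as the paper: both reduce to the closed form $s_{k+i}(H) = (k+i)r + (x-i) + (r-2)\,q_{k+i}(H)$ and then apply Lemma \ref{lemma:support2} to obtain the strict chain (difference at least $r-3 \geq 5$) and the final bound $s_k(H) < kr + x(r-1)$. Your inductive verification that $q_{k+i}(H) \leq x-i-1$ is a welcome refinement: the paper simply asserts $|S_{k+i} \cap A| = q_{k+i}(H)$ without explicitly ruling out the budget-capped regime $q_{k+i}(H) > x-i$, so your write-up is, if anything, slightly more complete on that point.
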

\begin{proof}
Let $i$ be an integer such that $0 \leq i \leq r-k$. Let $S_{k+i}$ be a $(k+x)$-sized independent set in $H$ having exactly $k+i$ vertices from the set $\beta$ such that $||S_{k+i}||$ is maximum among all such independent sets. That is, $||S_{k+i}|| = s_{k+i}(H)$ (from the definition of $s_{k+i}(H)$). 
Recall that $S_{k+i} \subseteq V(H) = A \uplus \beta \uplus \Pi$ and from its definition we have $|S_{k+i} \cap (A \cup \Pi)| = x-i$. Since the degree of any vertex in $A$ (which is $r-1$) is greater than the degree of any vertex in $\Pi$ (which is $1$), we have $||S_{k+i}||$ maximized when $S_{k+i} \cap A$ is maximized. 
Then, from the definition of $S_{k+i}$ and by Definition \ref{defn:Q}, we get $|S_{k+i} \cap A| = q_{k+i}(H)$. Therefore, 
\begin{eqnarray}
s_{k+i}(H) & = & ||S_{k+i}|| \nonumber \\
& = & ||S_{k+i} \cap \beta|| + ||S_{k+i} \cap A|| + ||S_{k+i} \cap \Pi|| \nonumber \\
& = & |S_{k+i} \cap \beta|r + |S_{k+i} \cap A|(r-1) + |S_{k+i} \cap \Pi| \nonumber \\
& = & (k+i)r + (q_{k+i}(H))(r-1) + (k+x) - \left(k+i + \left(q_{k+i}(H)\right)\right). \label{eqn:s_k}
\end{eqnarray}
By Statement (i) of Lemma \ref{lemma:support2}, we have $q_k(H) = x-a$, where $a \geq 1$. Therefore, $s_k(H) = kr + (x-a)(r-1) + a \leq kr + (x-1)(r-1) + 1 < kr + x(r-1)$.

Let $i$ be an integer such that $0 \leq i < r-k$. Then,  
\begin{eqnarray*}
s_{k+i}(H) - s_{k+i+1}(H) & = & (q_{k+i}(H) - q_{k+i+1}(H))(r-2) - (r-1) \mbox{\hspace{0.1in} (from Eqn. (\ref{eqn:s_k}))}\\
 &  \geq & (r-(k+i-1))(r-2) - (r-1)  \mbox{\hspace{0.1in} (by (ii) in Lemma \ref{lemma:support2})} \\
 & \geq & 2(r-2) - (r-1) \mbox{\hspace{0.1in} (since } i \leq r -k - 1 \mbox{)} \\
 & \geq & 5 \mbox{ \hspace{0.1in} (since } r = n-4 \mbox{ and } n > 11 \mbox{).} 
\end{eqnarray*}
\end{proof}

\begin{lemma}
\label{lem:NP-hard-Bip-2}
If the $r$-regular graph $G$  given in Construction \ref{construction:bipartite} does not contain any clique of size $k$, then no set of $k+x$ independent vertices in the bipartite graph $H$ constructed covers $kr + x(r-1)$ edges or more, where $x=m - (kr - {k \choose 2})$.
\end{lemma}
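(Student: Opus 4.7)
The plan is to show, for any independent set $I \subseteq V(H)$ with $|I| = k+x$, that $\|I\| < kr + (r-1)x$. Partition $I$ by the three vertex classes of $H$: write $j=|I \cap \beta|$, $a=|I \cap A|$, $p=|I \cap \Pi|$, so $j+a+p=k+x$. Since the $H$-degrees are $r$ on $\beta$, $r-1$ on $A$, and $1$ on $\Pi$,
\[
\|I\| \;=\; rj + (r-1)a + p \;=\; (r-1)j + (r-2)a + (k+x),
\]
and I will split the argument on the three ranges $k \le j \le r$, $j<k$, and $j>r$.

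The range $k \le j \le r$ is handed to us directly: $I \in \mathcal{I}_j$, so by Lemma \ref{lemma:support3}, $\|I\| \le s_j(H) < kr + (r-1)x$. The range $j<k$ is also easy from the identity above: since $r-1>1$, shifting all of $p$'s count into $a$ only increases $\|I\|$, giving $\|I\| \le (r-1)(k+x) + j < (r-1)(k+x) + k = kr + (r-1)x$, using $j<k$.

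The hardest case is $j>r$, where Lemma \ref{lemma:support3} does not apply. The plan is first to push the recursion of Lemma \ref{lemma:support1}(ii) one step further: its argument only requires $r-(k+i-2)>0$, which still holds at $i=r-k+1$, and thereby gives $p_{r+1}(G) \le p_r(G) - 1$. Combining this with Lemma \ref{lemma:support2} and the monotonicity of $p_{k+i}(G)$ in $i$ yields a uniform bound
\[
q_j(H) \;\le\; Q^{*} \;:=\; (x-1) - \tfrac{(r-k+1)(r-k+2)}{2}
\]
for every $j \ge r+1$. The parameters in Construction \ref{construction:bipartite} ($n>11$ and $k<n/2$) force $r\ge 8$ and $d:=r-k\ge 3$; under these, one checks that $Q^{*} \le k+x-j$ for all $j\le n$, so $a\le Q^{*}$ is the binding constraint. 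Substituting this into the expression for $\|I\|$ reduces the target inequality $\|I\| < kr + (r-1)x$ to
\[
(r-1)(j-k) \;<\; (r-2)\bigl(1 + \tfrac{(r-k+1)(r-k+2)}{2}\bigr),
\]
and since $j-k \le n-k = d+4$, it suffices to verify this at $j=n$; after clearing denominators it becomes $r(d^2+d-4) > 2d(d+2)$, which holds comfortably for $r\ge 8$ and $d\ge 3$. The main technical work is in this last case: extending the recursion one step, invoking monotonicity beyond that range, and then verifying the numerical inequality, which genuinely uses the construction's parameter restrictions.
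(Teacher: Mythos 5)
Your proof is correct, and its first two cases coincide in substance with the paper's own: your range $j<k$ is the paper's Case \ref{case:easy} (same degree-count bound, since for an independent set the coverage is exactly the degree sum, giving $\|I\|\le j+(r-1)(k+x)\le kr+x(r-1)-1$), and your range $k\le j\le r$ is the paper's Sub-case 2(i), an identical appeal to Definition \ref{def:S} and Lemma \ref{lemma:support3}. Where you genuinely diverge is $j>r$. The paper handles it structurally: writing $|Y_\beta|=n-i$ with $i\in\{0,\ldots,3\}$, it uses Observation \ref{obv:incidence} to note that vertices of $Y_A$ correspond to edges of $G$ avoiding a set of $n-i$ vertices, of which there are at most $\binom{i}{2}\le i$, so $|Y_A|\le i$; the target then reduces to $x>(n-k)\left(1+\frac{1}{r-2}\right)$, verified via $x\ge\frac{(n-7)(n-k)}{4}\ge\frac{5}{4}(n-k)>\frac{7}{6}(n-k)$ using $k\le\frac{n-1}{2}$ and $n>11$. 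You instead push the recursion of Lemma \ref{lemma:support1}(ii) one step beyond its stated range --- legitimate, since the decrement $r-(k+i-2)$ is still $1>0$ at $i=r-k+1$ --- telescope from $p_k(G)\le x-1$ to $p_{r+1}(G)\le(x-1)-\frac{(r-k+1)(r-k+2)}{2}$, and invoke monotonicity of $p_j(G)$ in $j$ (removing a vertex cannot increase coverage) to bound $a$ uniformly for all $j\ge r+1$. I checked your algebra: with $\|I\|=(r-1)j+(r-2)a+(k+x)$ the target does reduce to $(r-1)(j-k)<(r-2)\bigl(1+\frac{(r-k+1)(r-k+2)}{2}\bigr)$, it suffices to check $j=n$ since the left side is increasing in $j$, and there it is equivalent to $r(d^2+d-4)>2d(d+2)$ with $d=r-k$, which holds because $r\ge 8$ and $d\ge\frac{n-7}{2}\ge 3$. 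One step you should make explicit: you need $q_j(H)=p_j(G)$ for all $j$ up to $n$, whereas the paper asserts this identity only for $k\le j\le r$; it extends verbatim by the same incidence-graph correspondence of Observation \ref{obv:incidence}, so this is a presentational point, not a gap. As for trade-offs: the paper's treatment of this case is shorter and exploits the near-saturation of $\beta$ directly (at most $3$ vertices of $G$ are missed, hence at most $3$ usable $A$-vertices), but it needs the separate lower-bound estimate on $x$; your version is more systematic --- it reuses machinery already built, and the uniform bound on $q_j(H)$ would degrade more gracefully under changes to the construction's parameters --- at the cost of a heavier telescoped constant and its own numerical verification, which likewise consumes the hypotheses $n>11$ and $k<\frac{n}{2}$.
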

\begin{proof}
Suppose the graph $G$  given in Construction \ref{construction:bipartite} does not contain any clique of size $k$. We shall then show that no set of $k+x$ independent vertices in the bipartite graph $H$ constructed from $G$ covers $kr + x(r-1)$ edges or more

Let $Y$ be an independent set of vertices of cardinality at most $k+x$ in $H$. Then, $Y = Y_A \uplus Y_{\beta} \uplus Y_{\Pi}$, where $Y_A = Y \cap A$, $Y_{\beta} = Y \cap \beta$, and $Y_{\Pi} = Y \cap \Pi$. The contribution of any vertex $v \in Y$ to the coverage $||Y||$ is its degree and this is highest when $v \in Y_{\beta}$ (then, $deg(v) = r$) and lowest when $v \in Y_{\Pi}$ (in this case, $deg(v) = 1$). When $v \in Y_A$, $deg(v) = r-1$. Below, we shall prove that $||Y|| < kr + x(r-1)$. The proof is split into $2$ cases based on the cardinality of $Y_{\beta}$. 
\begin{case}[$|Y_{\beta}| < k$]
\label{case:easy} 
Let $|Y_{\beta}| = k - a$, where $a \geq 1$. Then, $||Y|| = |Y_{\beta}|r + |Y_A|(r-1) + |Y_{\Pi}|  \leq (k-a)r + (x+a)(r-1) \leq (k-1)r + (x+1)(r-1) = kr + x(r-1) - 1$. 
\end{case}
\begin{case}[$|Y_{\beta}| \geq k$]
We split this case into two subcases:- \\
(i) $|Y_{\beta}| \leq r$ ($= n-4$). \\
Let $0 \leq i \leq  r-k$. Let $|Y_{\beta}| = k+i$. Then, by Definition \ref{def:S} and  Lemma \ref{lemma:support3}, $||Y|| \leq s_{k+i} < kr + x(r-1)$. 
\\(ii) $|Y_{\beta}| > r ~(= n- 4)$. \\
Since $|\beta| = n$, by definition of $Y_{\beta}$, $|Y_{\beta}| \leq n$. Let $i \in \{0, \ldots , 3\}$. Let $|Y_{\beta}| = n-i$. 
The set $Y$ being an independent set in $H$, by Observation \ref{obv:incidence}, the vertices in $Y_A$ correspond to edges that are `outside' the set of vertices in $G$ that correspond to $Y_{\beta}$. As the cardinality of $Y_{\beta}$ is close to $n$ (in fact, at least $n-3$), we get $|Y_A| \leq i$.   
We are now prepared to estimate the coverage of $Y$. 
\begin{eqnarray*}
||Y|| & = & |Y_{\beta}|r + |Y_A|(r-1) + |Y_{\Pi}| \\ 
& \leq & (n-i)r + i(r-1) + \left((k+x) - n\right) \\
 & \leq & nr + k+x - n \hspace{0.2in} \mbox{(since }i=0\mbox{ yields the highest value)}
\end{eqnarray*} 
In order to prove subcase (ii), we need to show that 
\begin{eqnarray*}
kr + x(r-1) & > & nr + k+x - n \\
\mbox{i.e. }  x(r-2) &  >  &(n-k)(r-1) \mbox{\hspace{0.2in} (rearranging terms)} \\
\mbox{i.e. } x & >  &(n-k)\left(1  + \frac{1}{r-2}\right) 
\end{eqnarray*} 
Since $r=n-4$ and $n>11$, we have $r \geq 8$. Thus, in order to prove this subcase it is enough to prove the following inequality:-
\begin{eqnarray*}
\label{ineq:toProve}
x  &> & \frac{7}{6}(n-k).  
\end{eqnarray*}
We know that $x  =  m - \left(kr - {k \choose 2}\right) = \frac{nr}{2} - \left(kr - \frac{k^2-k}{2}\right) = \frac{nr - 2kr + k^2 - k}{2} = \frac{r(n-k) - kr + k^2 - k}{2} \\ = \frac{r(n-k) - k(n-4) + k^2 - k}{2} = \frac{(r-k)(n-k) + 3k}{2} > \frac{((n-4)-k)(n-k)}{2} \geq \frac{(n-7)(n-k)}{4}$ (since $k \leq \frac{n-1}{2}$. See the definition of $(n-4,k)$-CLIQUE in the beginning of this section.). Since $n>11$, we get $x \geq \frac{5}{4}(n-k) > \frac{7}{6}(n-k)$.  

\end{case}

\end{proof}   

Lemma \ref{lem:NP-hard-Bip-1} and Lemma \ref{lem:NP-hard-Bip-2} imply the following theorem. 
\begin{theorem}
\label{thm:bipNPhard}
D-MIVC problem on bipartite graphs is NP-hard. 
\end{theorem}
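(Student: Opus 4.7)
The plan is straightforward: assemble Lemma \ref{lem:NP-hard-Bip-1} and Lemma \ref{lem:NP-hard-Bip-2} into a polynomial-time many-one reduction from the $(n-4,k)$-CLIQUE problem to D-MIVC restricted to bipartite graphs. Given an instance $\langle G,k\rangle$ of $(n-4,k)$-CLIQUE with $n>11$ (the finitely many cases $n\le 11$ are decidable in constant time), I would first apply Construction \ref{construction:bipartite} to produce the bipartite graph $H$ in time polynomial in $n$. I would then set the budget to $k':=k+x$ and the edge threshold to $t:=kr+x(r-1)$, where $r=n-4$ and $x=m-\left(kr-{k\choose 2}\right)$, and output the D-MIVC instance $\langle H,k',t\rangle$. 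The construction and the arithmetic computing $k'$ and $t$ are manifestly polynomial.

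For correctness, I would handle the two implications separately. The forward direction, ``$G$ has a $k$-clique $\Rightarrow \langle H,k',t\rangle \in$ D-MIVC'', is exactly the content of Lemma \ref{lem:NP-hard-Bip-1}: the explicit set $T=\{a_1,\ldots,a_x,b_1,\ldots,b_k\}$ is independent in $H$, has cardinality $k+x=k'$, and satisfies $\|T\|=kr+x(r-1)=t$. The reverse direction, ``$\langle H,k',t\rangle \in$ D-MIVC $\Rightarrow G$ has a $k$-clique'', is the contrapositive of Lemma \ref{lem:NP-hard-Bip-2}. Here I would emphasise that the proof of that lemma in fact bounds $\|Y\|$ for every independent set $Y \subseteq V(H)$ of cardinality at \emph{most} $k+x$ (this is visible in Case \ref{case:easy}, where $|Y_\beta|<k$ is permitted and $|Y_A|+|Y_\Pi|\le x+a$ is used). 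Consequently, the budgetary slack ``$\le k'$'' rather than ``$=k'$'' in the definition of D-MIVC gives no additional power to a would-be yes-certificate, and the contrapositive applies as stated.

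Finally, I would invoke the NP-hardness of $(n-4,k)$-CLIQUE, which the excerpt has already derived by complementation from the NP-hardness of $(3,k)$-IS \cite{garey1976some}, together with the observation that the regime $k\ge n/2$ is polynomial-time decidable and so can be dispatched separately. Chaining this with the reduction above yields Theorem \ref{thm:bipNPhard}. The only step that is not purely mechanical is the ``at most $k+x$'' versus ``exactly $k+x$'' matter flagged in the previous paragraph; everything else is bookkeeping on top of the two supporting lemmas. I expect no further obstacles, since the structural and counting work has been front-loaded into Lemmas \ref{lemma:support1}--\ref{lemma:support3} and into the two main implications.
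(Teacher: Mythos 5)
Your proposal is correct and follows exactly the paper's route: the paper proves Theorem~\ref{thm:bipNPhard} precisely by combining Lemma~\ref{lem:NP-hard-Bip-1} and Lemma~\ref{lem:NP-hard-Bip-2} as a reduction from $(n-4,k)$-CLIQUE via Construction~\ref{construction:bipartite}, with the same parameters $k'=k+x$ and $t=kr+x(r-1)$. Your extra remark on the ``at most $k+x$'' versus ``exactly $k+x$'' issue is also consistent with the paper, whose proof of Lemma~\ref{lem:NP-hard-Bip-2} indeed bounds $\Vert Y \Vert$ for every independent set of cardinality at most $k+x$.
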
   
\subsection{A $\frac{1}{2}$-approximation algorithm}
\label{subsec:halfApprox}
Consider the following greedy approximation algorithm for the MWBIS problem in bipartite graphs. 
\begin{algorithm}[H]
\label{algo:halfapprox}
\caption{$\frac{1}{2}$-approximation algorithm 
for the MWBIS problem on bipartite graphs}
\begin{algorithmic}[1]
\INPUT A bipartite graph $G$ with bipartition $\{A,B\}$, a weight function $w:V(G) \rightarrow \mathbb{R}^+$, and a positive integer $k$.
\OUTPUT An independent set of size at most $k$. 

\item Let $S_{A}$ denote the set of $k$ highest weight vertices in $A$. In the case when $k > \vert A \vert$, $S_{A} = A$. Similarly, let $S_{B}$ denote the set of $k$ highest weight vertices in $B$. In the case when $k > \vert B \vert$, $S_{B} = B$. Find $S_A$ and $S_B$. 

\item If $\parallel S_{A} \Vert \geq \Vert S_{B} \Vert $, return $S_{A}$. Otherwise, return $S_{B}$. 

\end{algorithmic}
\end{algorithm}
We claim that Algorithm $1$ is a $\frac{1}{2}$-factor approximation algorithm 
for MWBIS in bipartite graphs. Without loss of generality, let us assume that Algorithm $1$ returns the set $S_{A}$. Let $O$ be an optimal solution for the MWBIS problem on the bipartite graph $G$ under consideration. Let $O_{A}$ = $O \cap A$ and $O_{B} = O \cap B$. Then $\Vert O \Vert$ = $\Vert O_{A} \Vert $ + $\Vert O_{B} \Vert$ $\leq $ $\Vert S_{A} \Vert $ + $\Vert S_{B} \Vert$ $\leq$ 2$\Vert S_{A} \Vert$. Hence, we prove the claim. It is easy to see that Algorithm $1$ runs in $O(nk)$ time. 

\begin{figure}[!ht]
\begin{center}
\begin{pspicture}(-0.5,-0.5)(3,8.5)
 	\psline[showpoints=true](0,1)(2,0)
	\psline[showpoints=true](0,1)(2,0.5)
	\psline[linestyle=dotted](1,1)(1,1.3)
	\psline[showpoints=true](0,1)(2,2)	
	
 	\psline[showpoints=true](0,3)(2,2)
	\psline[showpoints=true](0,3)(2,2.5)
	\psline[linestyle=dotted](1,3)(1,3.3)
	\psline[showpoints=true](0,3)(2,4)
	
	\psline[linestyle=dotted](0,5)(0,5.3)		
	
	\psline[showpoints=true](0,7)(2,6)
	\psline[showpoints=true](0,7)(2,6.5)
	\psline[linestyle=dotted](1,7)(1,7.3)
	\psline[showpoints=true](0,7)(2,8)
	
	\uput[l](0,1){$a_1$}
	\uput[l](0,3){$a_2$}
	\uput[l](0,7){$a_{\frac{k}{2}}$}
	
	\uput[r](2,0){$b_{1,1}$}
	\uput[r](2,0.5){$b_{1,2}$}
	\uput[r](2,2){$b_{1,x} (= b_{2,1})$}
	\uput[r](2,2.5){$b_{2,2}$}
	\uput[r](2,4){$b_{2,x} (= b_{3,1})$}
	\uput[r](2,6){$b_{\frac{k}{2}-1,x} (= b_{\frac{k}{2},1})$}
	\uput[r](2,6.5){$b_{\frac{k}{2},2}$}
	\uput[r](2,8){$b_{\frac{k}{2},x}$}
	
\end{pspicture}
\end{center}
\caption{Graph $H_1$}
\label{fig:H1}
\end{figure}
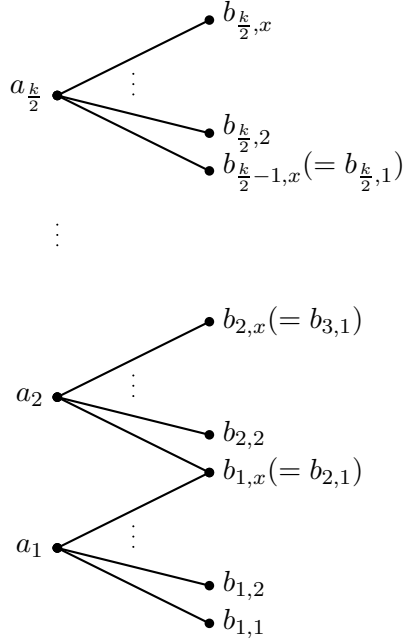

\subsubsection{Tight example}
Below we describe the construction of a bipartite graph $H$ that is a tight example to our analysis of Algorithm $1$. Let $H_{1}$ be the connected, bipartite graph given in Figure \ref{fig:H1}. Let $H_2$ be a graph isomorphic to $H_1$ (one can imagine $H_2$ to be a mirror copy of Figure \ref{fig:H1} drawn right above it). We obtain the connected, bipartite graph $H$ from the union of $H_1$ and $H_2$ by adding an edge connecting a pendant vertex in $H_1$ with a pendant vertex in $H_2$. More formally, $V(H) = V(H_1) \cup V(H_2)$ and $E(H) = E(H_1) \cup E(H_2) \cup \{h_1,h_2\}$, where $h_1$ and $h_2$ are any two pendant vertices in $H_1$ and $H_2$, respectively. 
We claim that $H$ is a tight example for Algorithm $1$. Consider the MIVC problem on graph $H$ with parameter $k$. The optimal solution chooses all the degree $x$ vertices. This gives a solution of size $kx$. Since each part in $H$ has exactly $\frac{k}{2}$ vertices of degree $x$ and the remaining vertices are each of degree at most $2$, our algorithm returns a solution of size at most $\frac{kx}{2} + \frac{k}{2}(2)$ = $\frac{kx}{2} + k$. Thus, the approximation ratio is at most $\frac{\frac{kx}{2} +k}{kx}$ = $\frac{1}{2} + \frac{1}{x}$. This ratio approaches $\frac{1}{2}$ as $x$ tends to infinity. 

\subsubsection{Extending Algorithm $1$ to general graphs}
\label{subsec:ExtBipAlgo}
Algorithm $1$ can be generalized to an arbitrary graph $G$ (that is not necessarily bipartite) provided we are given a proper vertex coloring of $G$. Suppose we are given a proper coloring of the vertices of $G$ using $p$ colors that partitions the vertices into $p$ color classes namely $C_{1},\ldots , C_{p}$. Let $S_{1},\ldots,S_{p}$ denote sets of $k$ highest degree vertices in color classes $C_{1}, \ldots , C_{p}$, respectively. For any $S_{i}$, if $k > \vert C_{i}\vert$, then $S_{i} = C_{i}$. Find a set $S_{i}$ such that $\Vert S_{i} \Vert \geq \Vert S_{j} \Vert$, $\forall j \in [p]$. Return $S_{i}$. 

This algorithm gives us a $\frac{1}{p}$-approximation 
for MWBIS. Suppose it returns a set $S_{j}$ as the output. Let $O$ be an optimal solution for the MWBIS problem on $G$. Let $O_{i}$ = $O \cap C_{i}$ for $i \in [p]$. Then $\Vert O \Vert$ = $ \Vert O_{1} \Vert+ \ldots + \vert O_{p} \Vert $ $\leq$ $p \Vert S_{j}\Vert$. It is easy to see that the algorithm runs in $O(nk)$ time. 

We know that we can properly color any graph $G$ with $\Delta +1$ colors using a greedy coloring algorithm, where $\Delta$ denotes the maximum degree of $G$. This gives us a $\frac{1}{\Delta +1}$-approximation algorithm for MWBIS on $G$. Since $d$-degenerate graphs can be properly colored using $d+1$ colors in linear time,  this algorithm  is a $\frac{1}{d+1}$-factor approximation algorithm for such graphs.

\subsection{LP relaxation and a matching integrality gap}
\label{subsec:bipartiteIntGap}
Let us take another look at Integer Program (\ref{IP:MIVC}) given in Section \ref{subsec:IPforMWBIS} for the MWBIS problem on $G$ (, where $G$ is a graph with $V(G) = \{v_1, \ldots , v_n\}$ and $w:V(G) \rightarrow \mathbb{R}^+$). Recall, we use $\mathcal{C}$ to denote the set of all maximal cliques in $G$. We obtain the LP relaxation of this program by changing the domain of variable $x_i$ from $x_i \in \{0,1\}$ to $x_i \geq 0$. 
\begin{align}
Maximize &\sum_{i \in [n]} w(v_i)\cdot x_{i} \label{LP:MIVC} \\
&s.t. \sum_{i=1}^{n} x_{i} \leq k  \nonumber \\
&\sum_{i:v_i \in C}x_i \leq 1, \quad \forall C \in \mathcal{C} \nonumber \\
&x_{i} \geq 0 , \quad \forall i \in [n]. \nonumber 
\end{align}
We now present an instance of the MIVC problem to illustrate that the integrality gap of the above LP is upper bounded by $\frac{1}{2} + \epsilon$ (where $\epsilon$ is any number greater than $0$) for bipartite graphs. 
\begin{example}
\label{example:BipIntGap}
Let $H$ be a bipartite graph with bipartition $\lbrace A, B \rbrace$, where $\vert B \vert$ = $k(k-1) +1 = p$ and $\vert A \vert$ = $(k -1)p + 1$. Let $A =  \lbrace a_{i,j}~|~i \in [p], j \in[k-1] \rbrace \cup \{a_{0}\}$ and $B = \lbrace b_{1}, \ldots , b_{p}\rbrace$. The edge set $E = \lbrace a_{0}b_{i}~|~ i \in [p] \rbrace$ $\cup$ $\lbrace a_{i,j}b_{i}~|~ i \in [p], j \in [k-1]\rbrace$. Note that $\deg(a_{0}) = p$, $\deg(b_{i}) = k$,  and $\deg(a_{i,j}) = 1$. For every $v \in V(H)$, let $w(v) = deg(v)$. Figure \ref{fig:BipIntGap} illustrates a drawing of $H$ with $k=3$. 
 \end{example}
The best integral solution to the above LP for $H$ is to select vertex $a_{0}$ and $k-1$ vertices  of degree $1$ from the set $\lbrace a_{i,j}~|~i \in [p], j \in[k-1] \rbrace$. The coverage by these $k$ vertices is $deg(a_{0}) + (k-1) = p + (k-1) = k(k-1) +1 + (k-1) =  k^{2}$.  

The $x$-variables in the LP have the following associations: $x_0$ is associated with vertex $a_0$, $x_i$ with $b_i$, and $x_{i,j}$ with $a_{i,j}$. One possible fractional solution to the above LP for $H$ is to assign $x_{0}$ = $\frac{k-1}{k}$, $x_{i} = \frac{1}{k}$, $\forall i \in [p]$, and $x_{i,j} = 0, \forall i \in [p], \forall j \in [k-1]$. 
This is a feasible solution to the above LP as $x_0 + \sum_{i \in [p], j \in [k-1]}x_{i,j}+ \sum_{i \in p}x_{i} = \frac{k-1}{k} + 0 + p\frac{1}{k} =k$ and sum of the $x$-values of every maximal clique is at most $1$. 
The number of edges covered by this fractional solution is $(\frac{k-1}{k} + \frac{1}{k})p + (\frac{1}{k}  p (k-1)) = 
p (1 + \frac{k-1}{k}) = (k(k-1) + 1 )(\frac{2k-1}{k})$. Hence the upper bound for integrality gap of the above LP established by this example is $\frac{k^{2}}{(k(k-1) + 1 )(\frac{2k-1}{k})} = \frac{k^3}{(k^{2}-k+1 )(2k-1)} = \frac{k^{3}}{2k^{3} - 3k^{2} + 3k-1} = \frac{1}{2- \frac{3}{k} + \frac{3}{k^{2}} - \frac{1}{k^{3}}} = \frac{1}{2- \frac{1}{k}(3 + \frac{1}{k^2} - \frac{3}{k}) }$. This value approaches $1/2$ as $k$ tends to infinity. Thus, the integrality gap of the above LP is upper bounded by $\frac{1}{2} + \epsilon$ (where $\epsilon$ is any number greater than $0$) for bipartite graphs. \\

\begin{figure}[!ht]
\begin{center}
\begin{pspicture}(0,-0.5)(14,2.5)
 	\psline[showpoints=true](1,2)(0,0)
	\psline[showpoints=true](1,2)(1,0)
	\psline[showpoints=true](1,2)(2,0)	
	
 	\psline[showpoints=true](3,2)(0,0)
	\psline[showpoints=true](3,2)(3,0)
	\psline[showpoints=true](3,2)(4,0)	
	
 	\psline[showpoints=true](5,2)(0,0)
	\psline[showpoints=true](5,2)(5,0)
	\psline[showpoints=true](5,2)(6,0)	
	
 	\psline[showpoints=true](7,2)(0,0)
	\psline[showpoints=true](7,2)(7,0)
	\psline[showpoints=true](7,2)(8,0)	
	
 	\psline[showpoints=true](9,2)(0,0)
	\psline[showpoints=true](9,2)(9,0)
	\psline[showpoints=true](9,2)(10,0)	
	
 	\psline[showpoints=true](11,2)(0,0)
	\psline[showpoints=true](11,2)(11,0)
	\psline[showpoints=true](11,2)(12,0)	
	
 	\psline[showpoints=true](13,2)(0,0)
	\psline[showpoints=true](13,2)(13,0)
	\psline[showpoints=true](13,2)(14,0)							
	
	\uput[u](1,2){$b_1$}
	\uput[u](3,2){$b_2$}
	\uput[u](5,2){$b_3$}
	\uput[u](7,2){$b_4$}
	\uput[u](9,2){$b_5$}
	\uput[u](11,2){$b_6$}
	\uput[u](13,2){$b_7$}
	
	\uput[270](0,0){$a_0$}

	\uput[270](1,0){$a_{1,1}$}
	\uput[270](2,0){$a_{1,2}$}
	
	\uput[270](3,0){$a_{2,1}$}
	\uput[270](4,0){$a_{2,2}$}
	
	\uput[270](5,0){$a_{3,1}$}
	\uput[270](6,0){$a_{3,2}$}
	
	\uput[270](7,0){$a_{4,1}$}
	\uput[270](8,0){$a_{4,2}$}
	
	\uput[270](9,0){$a_{5,1}$}
	\uput[270](10,0){$a_{5,2}$}
	
	\uput[270](11,0){$a_{6,1}$}
	\uput[270](12,0){$a_{6,2}$}
	
	\uput[270](13,0){$a_{7,1}$}
	\uput[270](14,0){$a_{7,2}$}	
		
\end{pspicture}
\end{center}
\caption{A drawing of the bipartite graph $H$ described in Example \ref{example:BipIntGap} with $k=3$.} 
\label{fig:BipIntGap}
\end{figure}
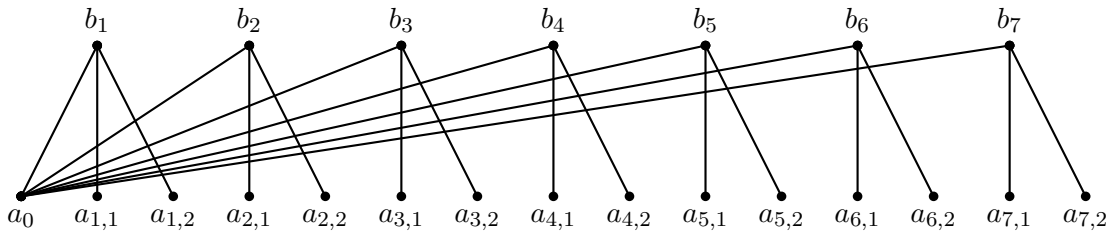

\section{Concluding remarks}
In this paper we study the MWBIS problem, a generalization of MWIS, mainly in the context of bipartite graphs. In Theorem \ref{thm:bipNPhard}, we prove that MWBIS on bipartite graphs is NP-hard and then we propose an easy, greedy $\frac{1}{2}$-approximation algorithm, whose factor of approximation matches with the integrality gap of a natural LP relaxation of the problem. 
It would be interesting to see if one can improve the factor of approximation using a more clever and sophisticated approach.

It is well-known that the MWIS problem in general graphs does not have a polynomial time constant factor approximation algorithm, unless P=NP. H{\aa}stad \cite{Has99} showed that there is no $\frac{1}{n^{1-\epsilon}}$-factor approximation algorithm for MWIS, where $\epsilon > 0$, assuming NP-hard problems have no randomized polynomial time algorithms. For every sufficiently large $\Delta$, there is no $\Omega(\frac{\log^2\Delta}{\Delta})$-factor polynomial time approximation algorithm for MWIS in a degree-$\Delta$ bounded graph, assuming the unique games conjecture and $P \neq NP$ \cite{austrin2009inapproximability}. As MWBIS is a generalization of MWIS, these results hold  true for the MWBIS problem too. Let $G$ be a graph on $n$ vertices with the degree of  any vertex being at most $\Delta$. Now, suppose we have a $t$-factor approximation algorithm $A(G)$ for finding MWIS in $G$. Then, we can obtain a $\frac{kt}{n}$-factor approximation algorithm for MWBIS in $G$. Let $A(G)$ return an independent set $S$ as output. Let $S'$ denote the set of $k$ highest weight vertices in $S$. If $k > |S|$, then $S' = S$. Return $S'$. Let $O$, $O'$ be optimal solutions for the MWIS and the MWBIS problem in $G$, respectively. We have $||S'|| \geq \frac{k||S||}{|S|} \geq \frac{k||S||}{n} \geq \frac{kt||O||}{n} \geq \frac{kt||O'||}{n}$. Boppana and Halld{\'o}rsson in \cite{boppana1992approximating} gave an $\Omega(\log^2n/n)$-factor approximation algorithm for MWIS.  This yields an $\Omega(k\log^2n/n^2)$-factor algorithm for MWBIS. For bounded degree graphs, Halld{\'o}rsson in \cite{halldorsson2000approximations} gave an $\Omega(\sqrt{\log \Delta}/\Delta)$-factor approximation algorithm for MWIS. This gives  an $\Omega(k\sqrt{\log \Delta}/n\Delta)$-factor approximation algorithm for MWBIS. Finding better approximation algorithms for MWBIS in general graphs is another nice open question to pursue. 

\bibliographystyle{plain}

\end{document}